\pdfoutput=1
\PassOptionsToPackage{pdftex,
pdfencoding=auto,
pdfnewwindow=true,
pdfusetitle=true,
bookmarks=true,
bookmarksnumbered=true,
bookmarksopen=false,
pdfpagemode=UseThumbs,
bookmarksopenlevel=1,
pdfpagelabels=true,
breaklinks=true,
colorlinks=true,
unicode=true,
pdfborder={0 0 0},
backref=false
}{hyperref}
\PassOptionsToPackage{dvipsnames,pdftex}{xcolor}
\documentclass[superscriptaddress,aps,prl,nofootinbib,twocolumn,twoside,reprint,letterpaper,showpacs,10pt,floatfix,tikz,amsmath,amsfonts,amssymb,longbibliography]{revtex4-2}

\AtBeginDocument{
    \newwrite\bibnotes
    \def\bibnotesext{Notes.bib}
    \immediate\openout\bibnotes=\jobname\bibnotesext
    \immediate\write\bibnotes{@CONTROL{REVTEX42Control}}
    \immediate\write\bibnotes{@CONTROL{
    apsrev42Control,author="08",editor="1",pages="1",title="0",year="1"}}
     \if@filesw
     \immediate\write\@auxout{\string\citation{apsrev42Control}}
    \fi
}

\usepackage{amsthm,mathtools,bm,graphicx,verbatim,mathrsfs,units}
\usepackage[utf8]{inputenx}
\usepackage{bbm}
\usepackage{xcolor}
\usepackage{dsfont}
\usepackage{float}
\usepackage{cancel}
\usepackage{booktabs}
\usepackage{mathrsfs}
\usepackage[alwaysadjust,flushleft,pointlessenum]{paralist}
\setdefaultenum{\bfseries1.}{}{}{}

\usepackage[nolist,withpage]{acronym}
\usepackage{makecell}

\definecolor{myurlcolor}{rgb}{0,0,0.7}
\definecolor{myrefcolor}{rgb}{0.8,0,0}
\definecolor{purple}{RGB}{128,0,128}
\definecolor{ultramarine}{RGB}{63, 0, 255}
\definecolor{medblue}{RGB}{0, 0, 100}
\definecolor{googleblue}{RGB}{34, 0, 204}
\definecolor{panblue}{RGB}{0,24,150}
\definecolor{carmine}{RGB}{150, 0, 24}
\definecolor{gray}{RGB}{150, 150, 150}
\usepackage{hyperref} 
\hypersetup{
unicode=true,
bookmarksnumbered=false,
bookmarksopen=false,
breaklinks=false,
colorlinks=true,
linkcolor=carmine,
citecolor=googleblue,
urlcolor=panblue,
anchorcolor=OliveGreen}
\usepackage{amsthm}
\usepackage{amsmath}
\usepackage{amssymb}
\usepackage{graphicx}
\usepackage{color}
\usepackage{enumitem} 
\usepackage[caption=false]{subfig}
\usepackage{tikz}
\usetikzlibrary{shapes.geometric,shapes,positioning,shapes.misc}
\usetikzlibrary{shapes.misc,shapes.geometric,arrows.meta,calc}
\usepackage{amsmath}
\usepackage{pifont}
\usepackage{tabularx}
\usepackage{array}
\newtheorem*{postulate*}{Postulate}
\newtheorem*{principle*}{Principle}
\newtheorem*{theorem*}{Theorem}

\newcommand{\W}{\mathrm{\W}}
\newcommand{\GHZ}{\mathrm{\GHZ}}

\usepackage{stmaryrd} %

\newcommand{\beq}{\begin{equation}}
\newcommand{\eeq}{\end{equation}}
\newcommand{\beqa}{\begin{eqnarray}}
\newcommand{\eeqa}{\end{eqnarray}}
\newcommand{\bra}[1]{\ensuremath{\left\langle#1\right|}}
\newcommand{\ket}[1]{\ensuremath{\left|#1\right\rangle}}

\renewcommand{\today}{\number\day\space\ifcase\month\or
   January\or February\or March\or April\or May\or June\or
   July\or August\or September\or October\or November\or December\fi
   \space\number\year}

\definecolor{SkyBlue}{RGB}{135,206,235}

\usepackage{cprotect} %

\newtheorem{thm}{Theorem}
\newtheorem{theorem}[thm]{Theorem}

\newtheoremstyle{defblock}{0.7\topsep}{0pt}{}{}{}{: }{0pt plus 1pt minus 1pt}{\thmname{\bfseries{#1}}\thmnumber{\bfseries{#2}}\color{medblue}\bfseries\thmnote{#3}}
\theoremstyle{defblock}

\theoremstyle{remark}

\newcommand{\tagprop}[1]{\tag{\hyperref[#1]{P\ref{#1}}}}

\usepackage[normalem]{ulem}
\usepackage{microtype}
\microtypecontext{spacing=nonfrench}
\microtypesetup{
expansion={true,nocompatibility},
protrusion={true,nocompatibility},
activate={true,nocompatibility},
tracking=true,
kerning=true,
spacing={true}
}
\usepackage[all=normal,floats=tight,mathspacing=tight,wordspacing=tight,paragraphs=normal,tracking=tight,charwidths=tight,mathdisplays=normal,sections=normal,margins=normal]{savetrees}
\everypar=\expandafter{\the\everypar\loosness=-1 }
\usepackage[style=american,autopunct=true]{csquotes}

\usepackage[scr=boondoxupr]{mathalfa} %

\usepackage{hyperref}
\hypersetup{
linkcolor=carmine,
citecolor=googleblue,
urlcolor=panblue,
anchorcolor=OliveGreen}

\usepackage{placeins}%

\newcommand{\TripText}[1]{%
  \rotatebox{90}{\raisebox{0.2ex}{$#1$}}%
}

\begin{document}

\begin{abstract}
We propose a novel causal principle that is a genuinely multipartite extension of Reichenbach’s common cause principle, namely, the coordination principle: parties in a network can achieve perfect randomized coordination—in particular, agree
on a uniformly random output—only if they all share a common cause. We show that this principle does not follow from the standard no-signaling and independence principles by 
providing an explicit theory satisfying all these principles while violating the coordination principle. Strikingly, we prove that
the coordination principle holds, however, in quantum theory for four parties, and derive noise-tolerant Bell-like inequalities that certify a common cause. 
We then extend these results to a genuinely quantum coordination task, showing that the four-partite GHZ state requires a quantum common cause which
can also be certified by experimentally accessible Bell-like inequalities. A companion paper Ref.~\cite{CompanionPaperPRA} generalizes these results for N parties, proving that the coordination principle is satisfied in general for quantum theory.

\end{abstract}

\title{Coordination Requires a Common Cause in Quantum Theory}

\author{Daniel Centeno$^{\dagger,*}$}
\affiliation{Perimeter Institute for Theoretical Physics, Waterloo, Ontario, Canada.}
\affiliation{Department of Physics and Astronomy, University of Waterloo, Waterloo, Ontario, Canada, N2L 3G1}

\author{Antoine Coquet$^{\dagger,\star}$}
\affiliation{Inria, CPHT, LIX, CNRS, École polytechnique, Institut Polytechnique de Paris, Palaiseau, France}

\author{Maria~Ciudad~Alañón}
\affiliation{Perimeter Institute for Theoretical Physics, Waterloo, Ontario, Canada.}
\affiliation{Department of Physics and Astronomy, University of Waterloo, Waterloo, Ontario, Canada, N2L 3G1}

\author{Lucas Tendick}
\affiliation{Inria, CPHT, LIX, CNRS,  École polytechnique, Institut Polytechnique de Paris, Palaiseau, France}

\author{Marc-Olivier Renou}
\affiliation{Inria, CPHT, LIX, CNRS, École polytechnique, Institut Polytechnique de Paris, Palaiseau, France}

\author{Elie Wolfe}
\affiliation{Perimeter Institute for Theoretical Physics, Waterloo, Ontario, Canada.}
\affiliation{Department of Physics and Astronomy, University of Waterloo, Waterloo, Ontario, Canada, N2L 3G1}

\begingroup
\renewcommand{\thefootnote}{$\dagger$}
\footnotetext[0]{These authors contributed specially significantly.}
\endgroup

\begingroup
\renewcommand{\thefootnote}{$*$}
\footnotetext[0]{dcentenodiaz@perimeterinstitute.ca}
\endgroup

\begingroup
\renewcommand{\thefootnote}{$\star$}
\footnotetext[0]{antoine.coquet@inria.fr}
\endgroup

\maketitle

\emph{Introduction.---}
Bell nonlocality is one of the most striking features of quantum theory, with profound theoretical and philosophical implications~\cite{bell1964einstein, Brunner2014BellReview}. %
It shows that the observed correlations between distant systems cannot always be explained by classical reasoning, challenging our everyday intuition~\cite{Einstein1935EPR}.
Besides many practical consequences \cite{Supic2020selftestingof,Ekert1991Crypto,Acin2007DIQKD, Colbeck2009Randomness, Pironio2010Randomness}, this raises a fundamental question~\cite{Wheeler1989WhyQuantum, Chiribella2016QuantumTheory}: \emph{if classical reasoning fails, what fundamental principles determine which correlations can be observed in nature?}\\ %
\indent Over the last decades, several such principles have been proposed~\cite{POPESCU1992Generic, Pawowski2009Information, Fritz2013Local, Brassard2006Limit, Linden2007Computation,Gisin2020NSI} with the goal of singling-out quantum theory as the only reasonable explanation for the correlations observed in nature. Although these principles are not sufficient for a full recovery of all quantum correlations~\cite{Navascues2015Almost}, they nevertheless capture some of its fundamental features and set limits on potential refinements of quantum theory. \\
\indent
Conceptually, these proposed principles are of three different flavours: information-theoretic, complexity-theoretic, and causality-based. 
The first two invoke intricate auxiliary assumptions \cite{Pawowski2009Information, Fritz2013Local,Brassard2006Limit,Linden2007Computation}. 
In contrast, existing causality principles (No-Signaling faster than light, and Independence of causally unrelated %
variables~\cite{Coiteux2021any, Coiteux2021no, Henson2014, Wolfe2016inflation, Gisin2020NSI, Reichenbach,popescu1994quantum}) appeal through their simplicity. %
Yet these simple causality principles impose only coarse constraints on admissible theories and thus have limited power to rule out unreasonable ones~\cite{Gisin2020NSI, popescu1994quantum}. %
This raises the question: \emph{Are No-Signaling and Independence the unique relevant causality principles, or do additional causal principles exist?}\\
\indent To answer this question, we consider the task of \emph{coordination}, i.e., the agreement on a random common output among a set of parties. 
We introduce a \emph{Coordination Principle}:  a (reasonable) causal theory of information should enable such coordination \emph{only when} all parties share a \emph{common cause}\footnote{Note that in this work we use the term ``cause'' to refer to direct and indirect causes. More formally, 
$X$ is a cause of a variable 
$Y$ if there exists a directed causal path from $X$ to $Y$ in the circuit under consideration. For example, in Fig.~\ref{fig:circuit}, the \emph{causes} of $A$ are the transformations $AB$, $AC$, $AD$ and the sources $ABC$, $ABD$ and $ACD$.
}. 
Although classical theories respect this principle \cite{steudel2015information}, it is not clear (i) whether quantum theory does, or, more generally, (ii) if this is a natural consequence of the simple No-Signaling and Independence causality principles. In this letter, we answer these questions. 
\begin{figure}[h]
\centering

\def\Y{0.57}    %
\def\M{0.53}   %

\scalebox{0.88}{ %
\begin{tikzpicture}[
  arrow/.style={->, >=Latex},
  trip/.style={
    draw,
    shape=semicircle,
    shape border rotate=90,
    minimum width=5mm, %
    minimum height=5mm,
    inner sep=1pt
  },
  single/.style={
    draw,
    shape=semicircle,
    shape border rotate=270,
    minimum width=4mm, %
    minimum height=4mm,
    inner sep=1pt
  },
  pair/.style={
    draw,
    shape=diamond,
    minimum width=10mm,
    minimum height=10mm,
    inner sep=1pt
  }
]

\node[trip] (ABC) at (0,   3*\Y) {\rotatebox{90}{$ABC$}};
\node[trip] (ABD) at (0,   1*\Y) {\rotatebox{90}{$ABD$}};
\node[trip] (ACD) at (0,  -1*\Y) {\rotatebox{90}{$ACD$}};
\node[trip] (BCD) at (0,  -3*\Y) {\rotatebox{90}{$BCD$}};

\node[pair] (AB) at (3,   5*\M) {$AB$};
\node[pair] (AC) at (3,   3*\M) {$AC$};
\node[pair] (AD) at (3,   1*\M) {$AD$};
\node[pair] (BC) at (3,  -1*\M) {$BC$};
\node[pair] (BD) at (3,  -3*\M) {$BD$};
\node[pair] (CD) at (3,  -5*\M) {$CD$};

\node[single] (A) at (6,   3*\Y) {$A$};
\node[single] (B) at (6,   1*\Y) {$B$};
\node[single] (C) at (6,  -1*\Y) {$C$};
\node[single] (D) at (6,  -3*\Y) {$D$};

\node[inner sep=0pt] (a) [right=3mm of A] {$a$};
\node[inner sep=0pt] (b) [right=3mm of B] {$b$};
\node[inner sep=0pt] (c) [right=3mm of C] {$c$};
\node[inner sep=0pt] (d) [right=3mm of D] {$d$};

\draw[arrow] (A) -- (a);
\draw[arrow] (B) -- (b);
\draw[arrow] (C) -- (c);
\draw[arrow] (D) -- (d);

\draw[arrow] (ABC) -- (AB);
\draw[arrow] (ABC) -- (AC);
\draw[arrow] (ABC) -- (BC);

\draw[arrow] (ABD) -- (AB);
\draw[arrow] (ABD) -- (AD);
\draw[arrow] (ABD) -- (BD);

\draw[arrow] (ACD) -- (AC);
\draw[arrow] (ACD) -- (AD);
\draw[arrow] (ACD) -- (CD);

\draw[arrow] (BCD) -- (BC);
\draw[arrow] (BCD) -- (BD);
\draw[arrow] (BCD) -- (CD);

\draw[arrow] (AB) -- (A);
\draw[arrow] (AB) -- (B);

\draw[arrow] (AC) -- (A);
\draw[arrow] (AC) -- (C);

\draw[arrow] (AD) -- (A);
\draw[arrow] (AD) -- (D);

\draw[arrow] (BC) -- (B);
\draw[arrow] (BC) -- (C);

\draw[arrow] (BD) -- (B);
\draw[arrow] (BD) -- (D);

\draw[arrow] (CD) -- (C);
\draw[arrow] (CD) -- (D);

\end{tikzpicture}
}
\caption{\textbf{Coordination task between $A$, $B$, $C$ and $D$ without a common cause.} Parties $A$, $B$, $C$ and $D$ each output a bit $a$, $b$, $c$ and $d$, respectively. Their task is to attempt to coordinate their outcome, i.e., obtain a uniformly random bit $a=b=c=d$.
The parties are assisted by shared information received according to the above circuit. %
Note that the circuit we describe is to be understood through a causal lens, i.e., any wire is represented by an arrow to indicate a causal influence. %
Additionally, we consider the systems carrying information in the wires to be of arbitrary dimension. The leftmost nodes ($ABC$, $ABD$, $ACD$, $BCD$) are \emph{sources of information}, the intermediate nodes ($AB, AC,\dots, CD$) are %
\emph{transformations of information}, and the party nodes ($A, B, C, D$) are \emph{measurements of information}.
\emph{Crucially} this circuit is such that the parties do not all share a \emph{common cause} (e.g., $ABC$ does not influence $D$), in other words they do not all receive information emanating from a common source in our circuit.  
We investigate whether coordination can be achieved in this circuit. First, we show that it is possible to construct an Operational Probabilistic Theory (OPT) of information where $A, B, C, D$ can successfully coordinate. Second, we prove that coordination is, however, not possible in quantum theory. We generalize these results to any number $N \geq 4$ of parties in~\cite{CompanionPaperPRA}.}
\label{fig:circuit}
\end{figure}\\
\indent We first provide a surprising answer to (ii) by showing that there do exist theories of information that obey the No-Signaling and Independence causality principles, but which violate our coordination principle. More precisely, we construct an Operational Probabilistic Theory (OPT) %
that satisfies the two former causal principles but allows for perfect coordination between parties not sharing a \emph{common cause} 
(see Fig.~\ref{fig:circuit}). \\
\indent Secondly, we provide an answer to (i) consistent with our intuition about reasonable causal theories by proving that nevertheless, within quantum theory, any perfect coordination must originate from a common cause. \\

\indent Then, we extend this second result to a purely quantum coordination task, namely, the creation of a multipartite GHZ state. We show that this quantum coordination task is impossible in the absence of a quantum common cause, \emph{even if} the parties are assisted by a global source of classical information. %
\\
\indent The last two results are derived assuming that the correlations in nature obey quantum theory. Under this assumption, we are able to provide noise-tolerant Bell-like inequalities that certify the existence of a common (quantum) cause and comment on their potential violation in recent quantum-optics experiments.
\\
\indent In contrast, our conceptual discussion concerns theories beyond quantum and is centered on the ideal notion of \emph{perfect} coordination. Since perfect coordination is a sharp, asymptotic property, its operational verification would require infinite statistics. We discuss the need for
an \emph{additional causality principle distinct from No-Signaling and Independence}, that would apply to any reasonable causal theory of information and would be \emph{able to quantitatively constrain} how parties lacking a common cause must fail to coordinate—not only perfectly, but also approximately in the presence of noise.

\indent \emph{Causality and coordination.---}
The notion of causality embodies our physical intuition regarding how variables can influence one another. Currently, it is formalized by the following two principles, \emph{No-Signaling} and \emph{Independence} (NSI)\cite{Gisin2020NSI}. 
\begin{principle*}[No-Signaling]
    A variable can only be influenced by its causal past. 
\end{principle*}
For instance, in the circuit\footnote{%
Note that the circuits we describe are to be understood through a causal lens, i.e., any wire is represented by an arrow to indicate a causal influence. Additionally, we consider that the wires contain arbitrary dimension systems carrying information.} of Fig.~\ref{fig:circuit}, the outcome of $A$ cannot be altered by any modification of the source $BCD$.
\begin{principle*}[Independence]
    Two variables that do not share a common cause cannot be correlated.\footnote{Note that we use the term variable to refer not only to atomic circuit nodes but also to sub-circuit formed through composition of elementary circuit nodes.}
\end{principle*}
This principle is the contrapositive of Reichenbach’s principle~\cite{Reichenbach}, which states that two \emph{correlated} variables must share a common cause. When extended to $N$ correlated variables (i.e., $N$ variables whose joint probability distribution does not factorize), it only guarantees that pairs of variables share a common cause. However,
it does not ensure the existence of a common cause shared by all $N$ variables, a limitation that has been repeatedly pointed out in the literature~\cite{henson2005comparing,uffink1999principle,steudel2015information,Henson2014}. %
To address this, %
we propose the following principle:

\begin{principle*}[Coordination]
$N$ perfectly coordinated variables must share a common cause.%
\end{principle*}
Note that this principle can also be phrased as its contrapositive (à la Independence): $N$ variables that do not share a common cause cannot be coordinated. Then, according to this principle, the parties $A,B,C,D$ in the circuit of Fig.~\ref{fig:circuit} should be unable to produce a coordinated random output as they lack a common cause. Conversely, if the variables share a common cause, then there exists a local (hidden) variable model (namely, one where the cause acts as shared randomness) that enables them to achieve perfect coordination.

\emph{A theory satisfying No-Signaling and Independence where coordination does not require a common cause---}
We now show that the Coordination principle is \emph{not} implied by the NSI Principles. %
This is surprising, as in many circuits our Coordination Principle \emph{can} be derived from the NSI principles~\cite{Coiteux2021any,Coiteux2021no}.%

More precisely, we construct an OPT~\cite{d2017quantum} that satisfies the NSI Principles yet allows four-partite coordination
without a shared common cause. 
An OPT is described by giving a set of sources, channels and measurements and assigning a probability distribution to every possible way of composing (wiring) them into a circuit.
An OPT satisfies the NSI Principles whenever these probability distributions obey the constraints specified by them—for example, the marginal of two variables without a common cause must factorize~\cite{d2017quantum}.

In~\cite{CompanionPaperPRA}, we present an explicit OPT involving four sources, six channels, and four measurements that satisfies all NSI constraints; yet, in the circuit of Fig.~\ref{fig:circuit}, yields a joint distribution in which the outputs $a,b,c,d$ are coordinated, thus realising a shared random bit.

Prior derivations of the Coordination Principle from the NSI principles in specific circuits relied on the \emph{non-fanout inflation technique}~\cite{Wolfe2016inflation}, which analyzes larger circuits in which some parts replicate the original circuit.
Here, these arguments are not working due to the existence of intermediate channels, not considered in these prior works~\cite{Henson2014,Coiteux2021any}. However, intermediate channels have been recently shown to be critical in order to produce some probability distributions \cite{centeno2024significance,wolfe2021quantum}. This reveals a fundamental limitation of NSI-based arguments in circuits where information can propagate through nontrivial channel layers.

\emph{Coordination requires a common cause in quantum theory---} %
The following theorem establishes that quantum theory obeys the Coordination Principle:
\begin{theorem}
    Quantum theory satisfies the Coordination Principle.
    \label{theoremPRL}
\end{theorem}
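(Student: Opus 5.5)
The plan is to adapt the classical ``intersection of causal pasts'' argument to quantum circuits. First I recall the classical version. If $a=b=c=d$ holds with certainty and each output is a deterministic function of its causes, then:
\begin{itemize}
\item $a=d$ forces the bit to depend only on the causes common to $A$ and $D$;
\item intersecting further with the causes of $C$ and then of $B$ leaves no source at all;
\item so the bit is constant, which contradicts uniformity.
\end{itemize}
For example, at the level of sources, $\{ABC,ABD,ACD\}\cap\{ABD,ACD,BCD\}\cap\{ABC,ACD,BCD\}\cap\{ABC,ABD,BCD\}=\emptyset$. In the quantum case I would make the circuit of Fig.~\ref{fig:circuit} ``as classical as possible'' without loss of generality. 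Each source is a pure state $\ket{\psi_{XYZ}}$, each channel $XY$ is replaced by its Stinespring isometry $V_{XY}$ (environments are handed to one of the two recipients), and each party performs a projective measurement $\{P^x_A\}$, and so on. The global state is then a pure $\ket{\Psi}=(\bigotimes V_{XY})\bigotimes\ket{\psi_{XYZ}}$. Perfect coordination reads $P^x_A\ket{\Psi}=P^x_B\ket{\Psi}=P^x_C\ket{\Psi}=P^x_D\ket{\Psi}$ for $x\in\{0,1\}$, with $\bra{\Psi}P^x_A\ket{\Psi}=1/2$.

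The core step is a quantum ``localization'' lemma: if two parties' projectors act identically on $\ket{\Psi}$, then the outcome can be reproduced by an operator supported only on their common causal past. Concretely, I would purify the sources and cut $\ket{\Psi}$ along the bipartition separating what $A$ needs from what $D$ needs. The relation $P_A^x\ket{\Psi}=P_D^x\ket{\Psi}$, with $P_A^x$ and $P_D^x$ acting on disjoint tensor factors, implies via Schmidt decomposition that both projectors act as the same projector on the support of the Schmidt vectors. Pulling $P^x_A$ back through the isometries $V_{AB},V_{AC},V_{AD}$ gives an effect on the source systems. That effect must commute with the decomposition induced by the source $BCD$ (outside $A$'s past) and by $ABC$ (outside $D$'s past).

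I would then show that the conditional states $\ket{\Psi_x}\propto P^x_A\ket{\Psi}$ differ only through a measurement on $ABD\otimes ACD$, after suitable local unitaries on the channel outputs. Iterating with $C$ restricts this further to $ACD$, and iterating with $B$ leaves nothing. At that point the reduced state is the same for $x=0,1$, while $\bra{\Psi}P^0_A\ket{\Psi}=1/2$ and the two branches are orthogonal, which is a contradiction.

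The main obstacle is the localization step in the presence of the intermediate channels. Channel $AB$ coherently mixes $ABC$ and $ABD$, so ``depending only on $ABD$'' is not a tensor-factor statement on $A$'s Hilbert space, unlike in the classical case. My first attempt would be to work in the Heisenberg picture. The idea is to express $V_{AB}^\dagger\cdots P^x_A\cdots V_{AB}$ as an operator on the sources. I would then use the fact that $P^x_A$ and $P^x_D$ coincide on $\ket{\Psi}$ to replace one by the other. After that, I would apply a lemma of the form: an operator $X\otimes\mathbb{1}$ equal on a state to $\mathbb{1}\otimes Y$ can be taken in the algebra generated by the shared systems, restricted to the state's support. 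If this algebraic route fails, the fallback is the entropic one. I would bound $I(a:\text{sources outside the common past})$ using conditional mutual information, aiming at the quantum analogue of $H(a)\le I(a:ACD)$ and hence $H(a)=0$, and then invoke strong subadditivity at each intersection step.
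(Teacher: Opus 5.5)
Your intersection-of-pasts strategy differs from the paper's route. The paper never localizes anything in the original circuit. It builds the quantum inflation of Fig.~\ref{fig:inflation-circuit}, in which $(A,B)$, $(B,C)$, $(C,D)$ reproduce the original pair marginals, and chains the sum-of-squares identities to $\widetilde{\Pi}^0_A\ket{\phi}=\widetilde{\Pi}^0_D\ket{\phi}$. This contradicts the fact that $A$ and $D$ share no source in the inflation.

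Your route can be completed, but as written its only nontrivial step, the localization lemma, is left open, and none of the mechanisms you propose delivers it:
\begin{itemize}
\item A Schmidt cut of the output state between $A$ and the rest only yields $[P^x_A,\rho_A]=0$. It cannot see which source a correlation came from.
\item The lemma ``$X\otimes\mathbb{I}=\mathbb{I}\otimes Y$ on a state gives an operator on the shared systems'' is false for general states. For example, $(P\otimes\mathbb{I})\ket{\Phi^+}=(\mathbb{I}\otimes P^{T})\ket{\Phi^+}$ with nothing shared.
\item The entropic fallback is unavailable. Source systems never coexist with the outputs, so $I(a:ACD)$ is not defined in any single state.
\item With Stinespring isometries, the pulled-back effects $V^\dagger P^x_AV$ are not projectors. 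Even localizing all the way to $V^\dagger P^x_AV\ket{\Psi_{\mathrm{src}}}=\tfrac12\ket{\Psi_{\mathrm{src}}}$ is then not contradictory: orthogonal branches $\tfrac12\ket{\Psi}\pm\ket{e}$, with $\ket{e}\perp\mathrm{range}\,V$ of norm $\tfrac12$, have the same image under $V^\dagger$.
\end{itemize}

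The missing idea is that independence of the sources makes the source-level state a product across \emph{whole} sources. Purify the sources and dilate each channel to a unitary with its own ancilla in a fixed pure state. Then $\widetilde{\Pi}^x_A=U^\dagger\Pi^x_AU$ is a projector supported on the sources and ancillas in $A$'s past. Set $X=ABC\otimes\mathrm{anc}_{AB}\otimes\mathrm{anc}_{AC}$ and $Y=ABD\otimes ACD\otimes\mathrm{anc}_{AD}$, and let $Z$ be everything else. Then $\ket{\Psi}=\ket{\alpha}_X\ket{\beta}_Y\ket{\gamma}_Z$, with $\widetilde{\Pi}^x_A$ acting on $XY$ and $\widetilde{\Pi}^x_D$ on $YZ$. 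Contracting first with $\bra{\gamma}_Z$ and then with $\bra{\alpha}_X$ gives
\[
M_{XY}\ket{\alpha\beta\gamma}=N_{YZ}\ket{\alpha\beta\gamma}\;\Longrightarrow\;\text{both equal }\ket{\alpha}\otimes Q\ket{\beta}\otimes\ket{\gamma},\qquad Q=\bra{\alpha}M_{XY}\ket{\alpha}.
\]
Repeat with $C$ (split $ABD\,\mathrm{anc}_{AD}\,|\,ACD\,|\,\text{rest}$) and then with $B$ (split $ACD\,|\,\emptyset\,|\,\text{rest}$). This yields $\widetilde{\Pi}^x_A\ket{\Psi}=\lambda_x\ket{\Psi}$. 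Idempotence forces $\lambda_x\in\{0,1\}$, contradicting $p(a=x)=\tfrac12$.

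Completed this way, your proof is more elementary than the paper's: it needs no inflation and no non-commuting copies. It also shows that any perfect agreement without a common cause must be deterministic, and it extends to $N$ parties by the same intersection. The paper's inflation formulation instead phrases everything through correlators of jointly measurable pairs. That is what yields the noise-robust inequality~\eqref{eq:SharedRandBitIneq}.
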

\begin{figure}[htbp]
\def\M{0.72}    %

\centering
\scalebox{0.7}{
\begin{tikzpicture}[
  arrow/.style={->, >=Latex},
  fadedarrow/.style={arrow, draw=black!40},
  trip/.style={
    draw,
    shape=semicircle,
    shape border rotate=90,
    minimum width=7.4mm,
    minimum height=7.4mm,
    inner sep=1pt,
    align=center
  },
  single/.style={
    draw,
    shape=semicircle,
    shape border rotate=270,
    minimum width=4mm,
    minimum height=4mm,
    inner sep=1pt
  },
  pair/.style={
    draw,
    shape=diamond,
    minimum width=13.6mm,
    minimum height=13.6mm,
    inner sep=1pt,
    align=center
  }
]

\node[pair]              (AD11) at (4,  8*\M) {$AD^{(1)}$};
\node[pair]              (AC11) at (4,  6*\M) {$AC^{(1)}$};
\node[pair]              (AB11) at (4,  4*\M) {$AB$};
\node[pair]  (BD11) at (4,  2*\M) {$BD^{(1)}$};  %
\node[pair]              (BC11) at (4,  0*\M) {$BC$};
\node[pair]  (AC12) at (4, -2*\M) {$AC^{(2)}$};  %
\node[pair]              (CD21) at (4, -4*\M) {$CD$};
\node[pair]              (BD21) at (4, -6*\M) {$BD^{(2)}$};
\node[pair]              (AD22) at (4, -8*\M) {$AD^{(2)}$};

\path (AC11) ++(4,0) node[single]             (A) {$A$};
\path (BD11) ++(4,0) node[single] (B) {$B$};  %
\path (AC12) ++(4,0) node[single] (C) {$C$};  %
\path (BD21) ++(4,0) node[single]             (D) {$D$};

\node[inner sep=0pt]             (a) at ($(A.east)+(3mm,0)$) {$a$};
\node[inner sep=0pt] (b) at ($(B.east)+(3mm,0)$) {$b$};
\node[inner sep=0pt] (c) at ($(C.east)+(3mm,0)$) {$c$};
\node[inner sep=0pt]             (d) at ($(D.east)+(3mm,0)$) {$d$};

\draw[arrow]      (A) -- (a);
\draw[arrow] (B) -- (b);
\draw[arrow] (C) -- (c);
\draw[arrow]      (D) -- (d);

\node[trip] (ACD1) at ($(AD11)+(-4,0)$) {\TripText{ACD^{(1)}}};

\node[trip] (ABC1) at ($(BD11)+(-4,0)$) {\TripText{ABC}};

\node[trip] (ABD1) at ($(ACD1)!0.5!(ABC1)$) {\TripText{ABD^{(1)}}};

\node[trip] (BCD1) at ($(AC12)+(-4,0)$) {\TripText{BCD}};

\node[trip] (ABD2) at ($(AD22)+(-4,0)$) {\TripText{ABD^{(2)}}};

\node[trip] (ACD2) at ($(ABD2)!0.5!(BCD1)$) {\TripText{ACD^{(2)}}};

\draw[arrow] (ACD1) -- (AD11);
\draw[arrow] (ABD1) -- (AD11);

\draw[arrow] (ACD1) -- (AC11);

\draw[arrow] (ABD1) -- (AB11);
\draw[arrow] (ABC1) -- (AB11);

\draw[arrow] (ABD1) -- (BD11);

\draw[arrow] (ABC1) -- (BC11);
\draw[arrow] (BCD1) -- (BC11);

\draw[arrow] (ACD2) -- (AC12);

\draw[arrow] (BCD1) -- (CD21);
\draw[arrow] (ACD2) -- (CD21);

\draw[arrow] (ABD2) -- (BD21);

\draw[arrow] (ACD2) -- (AD22);
\draw[arrow] (ABD2) -- (AD22);

\draw[arrow]      (AD11) -- (A);
\draw[arrow]      (AC11) -- (A);
\draw[arrow]      (AB11) -- (A);

\draw[arrow]      (AB11) -- (B);
\draw[arrow]      (BC11) -- (B);
\draw[arrow] (BD11) -- (B);   %

\draw[arrow] (AC12) -- (C);   %
\draw[arrow]      (BC11) -- (C);
\draw[arrow]      (CD21) -- (C);

\draw[arrow]      (AD22) -- (D);
\draw[arrow]      (BD21) -- (D);
\draw[arrow]      (CD21) -- (D);

\begin{scope}[shift={(ABC1.east)},scale=0.8,
              line cap=round,line join=round]
  \colorlet{wire}{black}
  \colorlet{contact}{black}
  \colorlet{contactinactive}{black!35}
  \colorlet{lever}{black}
  \colorlet{ghostlever}{black!40}
  \colorlet{arrowclr}{black!70}

  \coordinate (ABCin)    at (0,0);
  \coordinate (ABCpivot) at (1.75,0);

  \draw[wire] (ABCin) -- (ABCpivot);

  \fill[contact] (ABCpivot) circle (4pt);

  \draw[arrow]      (ABCpivot)   -- (AC11);   %
  \draw[arrow] (ABCpivot) -- (AC12);   %

\end{scope}

\begin{scope}[shift={(BCD1.east)},scale=0.8,
              line cap=round,line join=round]
  \colorlet{wire}{black}
  \colorlet{contact}{black}
  \colorlet{contactinactive}{black!35}
  \colorlet{lever}{black}
  \colorlet{ghostlever}{black!40}
  \colorlet{arrowclr}{black!70}

  \coordinate (BCDin)    at (0,0);
  \coordinate (BCDpivot) at (1.75,0);
  \coordinate (BCDup)    at (2,0.5);
  \coordinate (BCDdown)  at (2,-0.5);

  \draw[wire] (BCDin) -- (BCDpivot);

  \fill[contact] (BCDpivot) circle (4pt);

  \draw[arrow] (BCDpivot)   -- (BD11);  %
  \draw[arrow]      (BCDpivot) -- (BD21);%

\end{scope}

\end{tikzpicture}
}
\caption{
\textbf{Graphical representation of the quantum inflation used to demonstrate that Quantum theory satisfies the Coordination Principle.}
The superindex in parenthesis indicate the copy index whenever there is more than one copy. The black dots highlight whenever the same Hilbert space of a given source state is pertinent in defining more than one transformation. Accordingly, pairs of transformations related by a black dot ($AC^{(1)}$ and $AC^{(2)}$ or $BD^{(1)}$ and $BD^{(2)}$) cannot be physically implemented simultaneously. That implies incompatibility of their downstream measurements ($A$ and $C$ or $B$ and $D$ respectively). Nevertheless, while the full circuit is not physically valid, certain subcircuits remain well defined: in particular, any subcircuit obtained by following only one of the outgoing wires from each black dot is implementable. Some of these valid subcircuits reproduce the original circuit on the corresponding parties—for instance, the adjacent pairs- $(A,B),(B,C),(C,D)$. %
The proof outline is to first assume that perfect coordination is achievable in Fig.~\ref{fig:circuit}. Then, by consistency between the quantum inflation and the original circuit, the marginal distributions on the pairs $(A,B)$, $(B,C)$, $(C,D)$ must be that of a shared random bit. %
Using a sum of squares argument, the quantum formalism establishes that the pair $(A,D)$ must also output a shared random bit. However, $A$ and $D$ are causally independent in the quantum inflation, thus, according to the Independence principle, they must output uncorrelated bits which is in contradiction with outputing a shared random bit.}
\label{fig:inflation-circuit}
\end{figure}

\begin{proof}[Proof outline]
We illustrate the proof strategy with four events; a complete treatment of the general case is given in~\cite{CompanionPaperPRA}.
~\\
\emph{Step 1: Reduction to a canonical circuit.}
As shown in~\cite{CompanionPaperPRA}, any operational prediction in any circuit with measurement nodes $A,B,C,D$ that do not share a common cause can be achieved by the circuit of Fig.~\ref{fig:circuit}.  
To see this, one 
labels each source or channel by the measurement nodes in its causal future and merges nodes with identical labels: one obtains a circuit that can be embedded in Fig.~\ref{fig:circuit}. Thus, it suffices to analyse Fig.~\ref{fig:circuit}.
~\\
\emph{Step 2: Assume perfect coordination and inflate.} 
Suppose, toward a contradiction, that the circuit in Fig.~\ref{fig:circuit} can produce perfect coordination among $A,B,C,D$ within quantum theory. Then, there exist tripartite states $\ket{\psi}_{ABC},\ldots,\ket{\psi}_{BCD}$, unitaries $U_{AB},\ldots,U_{CD}$, and measurements $\Pi_A^a,\ldots,\Pi_D^d$ such that
\[
P(a,b,c,d)
= \langle \psi |\, U^{\dagger} \Pi^{a,b,c,d} U \,|\psi\rangle
=
\begin{cases}
1/2 & \text{if } a=b=c=d,\\[0.15em]
0 &\text{otherwise},
\end{cases}
\]
where $|\psi\rangle$, $U$, and $\Pi^{a,b,c,d}$ denote the tensor products of all states, unitaries, and projectors respectively. Then, we use the \emph{
quantum inflation technique} \cite{wolfe2021quantum}\footnote{The \emph{quantum inflation technique} also analyzes larger circuits in which some parts replicate the original circuit as the \emph{non-fanout inflation technique}. However, it also imposes the mathematical structure of Quantum Theory.}. In particular, we consider the specific quantum inflation of the original circuit (Fig.~\ref{fig:circuit}) which is graphically represented in %
Fig.~\ref{fig:inflation-circuit}. %
Note that the black dots used in the graphical representation of Fig.~\ref{fig:inflation-circuit} indicate whenever the same Hilbert space of a given source state is pertinent in defining more than one transformation. Accordingly, pairs of transformations related by a black dot (i.e., $AC^{(1)}$ and $AC^{(2)}$ or $BD^{(1)}$ and $BD^{(2)}$) cannot be physically implemented simultaneously. In turn, that implies incompatibility of their downstream measurements\footnote{The incompatibility of the measurements in the quantum inflation is mathematically implemented by removing commutation rules between the operators describing the measurements of those parties.} ($A$ and $C$ or $B$ and $D$ respectively). Nevertheless, while the full circuit is not physically valid, certain subcircuits remain well defined: in particular, any subcircuit obtained by following only one of the outgoing wires from each black dot is implementable, like the adjacent pairs, $(A,B),(B,C),(C,D)$, or the pair $(A,D)$.

\emph{Step 3: Consistency and contradiction.}
The proof establishes the following three facts:

(i) The two-party subnetworks of $(A,B)$, $(B,C)$ and $(C,D)$ are identical to the corresponding subnetworks of the original circuit. %
Thus, under our assumptions, the associated marginal distributions must be those of a \emph{shared random bit}. %
This follows directly from consistency between the behaviours in Fig.~\ref{fig:circuit} and Fig.~\ref{fig:inflation-circuit}. %

(ii) From (i), using the quantum formalism we can establish that the marginal distribution on $(A,D)$ must also be that of a shared random bit. To this end, we use the Heisenberg picture operators, $\widetilde{\Pi}_{A}^{a} \coloneqq U_{A}^{\dagger}\Pi_{A}^{a} U_{A}$, where $U_{A}$ is the tensor product of all the unitaries corresponding to the transformations in the past of $A$, namely  $U_{A}\coloneqq {U_{AD^{(1)}} \otimes U_{AC^{(1)}}\otimes U_{AB}}$. The remaining operators $\widetilde{\Pi}_{B}^{b}$, $\widetilde{\Pi}_{C}^{c}$ and $\widetilde{\Pi}_{D}^{d}$ are defined analogously. Now, since $(A,B)$ is jointly measurable in our inflation, a further consequence of the marginal being associated to a shared random bit is that 
\begin{align}
\langle \phi | \widetilde{\Pi}_{A}^{a}\widetilde{\Pi}_{B}^{b}  |\phi\rangle=\langle \phi | \widetilde{\Pi}_{A}^{a}  |\phi\rangle= \langle \phi | \widetilde{\Pi}_{B}^{b}  |\phi\rangle = \frac{1}{2}
\end{align}
where $|\phi\rangle$ is the tensor product of all the states used in the quantum inflation and hence
\begin{align}\begin{split}
||(\widetilde{\Pi}_{A}^{a}-\widetilde{\Pi}_{B}^{b})|\phi\rangle&||^2=\langle \phi |(\widetilde{\Pi}_{A}^{a}-\widetilde{\Pi}_{B}^{b})^2|\phi\rangle 
         \\&= \langle \phi | (\widetilde{\Pi}_{A}^{a})^2 -2\widetilde{\Pi}_{A}^{a}\widetilde{\Pi}_{B}^{b} + (\widetilde{\Pi}_{B}^{b})^2 |\phi\rangle = 0 \,.
\end{split}\end{align}
From this \emph{sum-of-squares argument} we conclude that $\widetilde{\Pi}_{A}^{0}\ket\phi = \widetilde{\Pi}_{B}^{0}\ket\phi$.
The same reasoning applies to the pairs of adjacent nodes $(B,C)$ and $(C,D)$. Thus, by transitivity, $\widetilde{\Pi}_{A}^{0}|\phi\rangle = \widetilde{\Pi}_{D}^{0}|\phi\rangle$.
Given that projectors are idempotent this would mean $\bra{\phi}\widetilde{\Pi}_{A}^{0} \widetilde{\Pi}_{D}^{0}\ket{\phi} = \bra{\phi}\widetilde{\Pi}_{A}^{0} \ket{\phi} = \bra{\phi}\widetilde{\Pi}_{D}^{0} \ket{\phi}$.

(iii) However, from the quantum inflation, we see
that $A$ and $D$ do not share any source in their past,
hence they are causally independent. Then, by the Independence principle, they must output \emph{independent} bits, i.e., $\bra{\phi}\widetilde{\Pi}_{A}^{0} \widetilde{\Pi}_{D}^{0}\ket{\phi} = \bra{\phi}\widetilde{\Pi}_{A}^{0} \ket{\phi}  \bra{\phi}\widetilde{\Pi}_{D}^{0} \ket{\phi} $.

The predictions (ii) and (iii) are incompatible: $(A,D)$ cannot be both perfectly coordinated and independent.  
Thus coordination in Fig.~\ref{fig:circuit} is impossible within quantum theory, completing the proof.
\end{proof}

The quantum inflation method is usually implemented computationally.
In~\cite{CompanionPaperPRA}, we obtain a noise-robust version of Theorem~\ref{theoremPRL} 
: within Quantum Theory, any four measurements producing bits and not sharing a common cause must satisfy:
\begin{equation}\label{eq:SharedRandBitIneq}
    \langle AB \rangle + \langle BC \rangle + \langle CD \rangle \leq \frac{\langle A\rangle \langle  D \rangle}{2} + \frac{3\sqrt{3}}{2} ,
\end{equation}
where the correlators are $\langle A \rangle=p(a=0)-p(a=1)$ and 
$\langle AB \rangle = p(a=b)-p(a\neq b)$, with analogous definitions for the other correlators.  
Inequality~\eqref{eq:SharedRandBitIneq} is a Bell-like inequality which, when violated, certifies that, either the four variables share a common cause, or 
the observed data cannot arise from a quantum process.  
For perfect shared random bits, the inequality reduces to $3\leq \frac{3\sqrt{3}}{2}\approx 2.598$, and is therefore maximally violated.

\emph{Obtaining a $\mathrm{GHZ}$ state requires a quantum common cause---}
We now move to a genuinely quantum version of the coordination task.  
Instead of producing classical bits, the four parties are required to output qubits whose joint state is the four-partite GHZ state 
$\ket{\mathrm{GHZ}}_{ABCD} = (\ket{0000}+\ket{1111})/\sqrt{2}$.  
Note that measuring this state in the computational basis yields the distribution of a perfect shared random bit, so it is clear that $\mathrm{GHZ}$ cannot be generated from the circuit of Fig.~\ref{fig:circuit}.

In~\cite{CompanionPaperPRA}, we also show that even if the four sources $ABC$, $ABD$, $ACD$, and $BCD$ share a \emph{classical} common cause, a GHZ state still cannot be produced in quantum theory.\footnote{
This already follows from the convex extremality of GHZ as a pure quantum state: a convex mixture of states yielding a GHZ state must consist solely of GHZ states. Thus, a pure GHZ could only be achieved \emph{with} shared randomess if it could already be achieved \emph{without} shared randomness, and the latter is already excluded per the impossibility of coordination without shared randomness.}
More precisely, when $A$, $C$ and $D$ each have two settings, and $B$ has three settings, if $p(C^1\cdot D^1=1)=p(C^1\cdot D^1 =-1)$ where the superindex indicates the setting, the following inequality holds:
\begin{equation}\label{eq:SharedGHZIneq}
\begin{split} \left(I_{CHSH}^{C^1\cdot D^1=-1}\circ \left\{ AB\right\}\right)^2  + \left(I_{CHSH}^{C^1\cdot D^1=1}\circ \left\{ AB\right\}\right)^2 \\ + 8(3[\langle A^0B^2\rangle + \langle B^2C^0\rangle + \langle C^0D^0\rangle]-8)^2\leq 16,
\end{split}
\end{equation}
where $I_{CHSH}^{C^1\cdot D^1=\pm 1}\circ \left\{ AB\right\}$ is the CHSH inequality conditioned on a particular output of $C$ and $D$ for their first settings.\footnote{Note that depending on the value on which we condition, we take a different version of the CHSH inequality, see the companion work~\cite{CompanionPaperPRA} for details.}

\emph{Experimental considerations---}
Consider first an experiment in which the parties $A,B,C,D$ output classical bits and violate inequality~\eqref{eq:SharedRandBitIneq}.  
Assuming quantum theory correctly describes the experiment, such a violation certifies that the four nodes share a common cause in their past.  
This is straightforward to implement in practice, as classical communication can distribute a shared random bit.

Now consider an experiment in which $A,B,C,D$ output qubits and violate inequality~\eqref{eq:SharedGHZIneq}.  
Under the assumption that quantum theory holds, such a violation certifies the presence of a \emph{quantum} common cause among the four nodes.  
For a noisy state  
$\rho = v\,\ket{\mathrm{GHZ}_4}\!\bra{\mathrm{GHZ}_4} + (1-v)\,\mathbb{I}/16$,  
our inequality is violated whenever $v>94.17\%$. %
Note that Ref.~\cite{CompanionPaperPRA} provides a general noise-tolerant inequality for the fidelity required to certify a quantum common cause for $N-$partite GHZ state correlations.

This threshold is compatible with fidelities reported in recent multipartite GHZ experiments~\cite{cao2022experimental}.  
However, existing four-photon GHZ experiments \emph{artificially simulate $\ket{\mathrm{GHZ}_4}$ through strong post-selection}, retaining only those trials in which all parties detect a photon. Such post-selection may allow the heralding event to depend on the measurement process and, therefore, does \emph{not} certify a genuine common cause.  

A proper test of our inequality would require an \emph{event-ready, heralded} $\mathrm{GHZ}_4$ source, where the heralding signal is independent of all measurement outputs.
Heralded GHZ generation has only been demonstrated for \emph{three} parties to date, for example in Refs.~\cite{chen2024heralded,cao2024photonic}. Extending these heralded schemes to four spacelike-separated nodes remains an open experimental challenge.

\emph{Discussion---}
In this Letter, we introduced the \emph{Coordination Principle}, a multipartite extension of Reichenbach’s Principle (or its contrapositive, Independence). It states that perfect coordination among $N$ parties is possible only if the parties share a common cause. We showed that \emph{this principle is not implied by the standard causal assumptions} of No-Signaling and Independence: there exist theories of information satisfying them in which perfect coordination can occur without any common cause.

However, we then proved that \emph{quantum theory satisfies this Coordination Principle} and derived noise-tolerant Bell-like inequalities whose violation certifies the presence of a common cause. Our inequalities are a first step towards the strongest operational constraints implied by the Coordination Principle in quantum theory. Tighter bounds can be obtained by pushing quantum inflation to higher levels; however, the computational cost grows rapidly, making these improvements prohibitive (often infeasible in practice) with current methods and hardware.
We further established that GHZ-type quantum coordination requires a quantum common cause and derived
noise-tolerant inequalities capable of witnessing such coordination in noisy experiments.

Overall, our results motivate the search for a quantitative Coordination Principle: a causal principle that not only forbids \emph{perfect} coordination without a common cause, but also quantitatively constrains how well parties lacking a shared common cause can coordinate in \emph{any} reasonable causal theory of information. Formulating such a principle would require additional conceptual input, since one must justify the numerical thresholds that should bound imperfect coordination—an issue that appears highly nontrivial. In this work, we adopted a conservative position: any quantitative version of the principle must, at minimum, imply the qualitative Coordination Principle introduced here. 

Developing such quantitative version, tight enough to yield experimentally accessible bounds while remaining theory-independent, remains an open and promising direction for future research. We hypothesize that such a quantitative formulation will be strong enough to rule out more complex forms of coordination than mere agreement on a shared random output. %

\emph{Acknowledgments---}
We thank Victor Gitton, Tein van der Lugt, Marina Maciel Ansanelli, Roberto D. Baldij\~{a}o, Fatemeh Moradi, Xiangling Xu, Peter Brown, Augustin Vanrietvelde and Renato Renner for fruitful discussions. AC would like to thank Prof. Marc-Olivier Renou and Prof. Renato Renner for supervising his master's thesis, which in part led to the present publication. %
A.C., L.T. and M.-O. R. acknowledge funding by the ANR for the JCJC grants LINKS (No. ANR-23-CE47-0003), the T-ERC QNET (No. ANR-24-ERCS-0008), the project QUANTINT, as well as the European Union’s Horizon 2020 Research and Innovation Programme under QuantERA Grant Agreements No. 731473 and No. 101017733. This work was funded by the European Union under the Marie Skłodowska-Curie Actions (MSCA) through the QNETS project (grant agreement ID: 101208259). Views and opinions expressed are, however, those of the author(s) only and do not necessarily reflect those of the European Union or the European Education and Culture Executive Agency (EACEA). Neither the European Union nor EACEA can be held responsible for them.
MCA and DC also acknowledge support from the Natural Sciences and Engineering Research Council of Canada (grants 50505-11449 and 50505-11450). Research at Perimeter Institute is supported in part by the Government of Canada through the Department of Innovation, Science and Economic Development and by the Province of Ontario through the Ministry of Colleges and Universities.

\nocite{apsrev42Control}
\bibliographystyle{apsrev4-2-wolfe}
\setlength{\bibsep}{.4\baselineskip plus .1\baselineskip minus .1\baselineskip}
\nocite{MasterThesisAntoine}
\bibliography{Refs}

\end{document}